\newtheorem{theorem}{Lemma}
\title{Modeling Relational Patterns for Logical Query Answering over Knowledge Graphs}
\author{Yunjie He \and Mojtaba Nayyeri \and Bo Xiong \and Yuqicheng Zhu \\University of Stuttgart \\ \texttt{\{yunjie.he, mojtaba.nayyeri, bo.xiong, yuqicheng.zhu \}@ipvs.uni-stuttgart.de} 
\AND
Evgeny Kharlamov \\ Bosch Center for Artificial Intelligence \\ \texttt{Evgeny.Kharlamov@de.bosch.com}
\And
Steffen Staab \\ University of Stuttgart  and\\ University of Southampton\\ \texttt{steffen.staab@ipvs.uni-stuttgart.de}
}
\begin{document}
\maketitle

\begin{abstract}
Answering first-order logical (FOL) queries over knowledge graphs (KG) remains a challenging task mainly due to KG incompleteness. 
Query embedding approaches this problem by computing the low-dimensional vector representations of entities, relations, and logical queries. 
KGs exhibit relational patterns such as symmetry and composition and modeling the patterns can further enhance the performance of query embedding models.
However, the role of such patterns in answering FOL queries by query embedding models has not been yet studied in the literature.
In this paper, we fill in this research gap and empower FOL queries reasoning with pattern inference by introducing an inductive bias that allows for learning relation patterns. 
To this end, we develop a novel query embedding method, RoConE, that defines query regions as geometric cones and algebraic query operators by rotations in complex space. RoConE combines the advantages of Cone as a well-specified geometric representation for query embedding, and also the rotation operator as a powerful algebraic operation for pattern inference. 
Our experimental results on several benchmark datasets confirm the advantage of relational patterns for enhancing logical query answering task.
\end{abstract}

\section{Introduction}
Answering first-order logical (FOL) queries over knowledge graphs (KGs) has been an important and challenging problem \cite{BetaE}. 
Among various approaches,
logical query embedding \citep{GQE,Query2box,ConE,BetaE} has received huge attention due to its great efficiency and effectiveness. 
Logical query embeddings take as input the KGs and a set of first-order logical queries that include existential quantification ($\exists$), conjunction ($\land$), disjunction ($\lor$), and negation ($\neg$). 
Figure \ref{fig:FOL query example} shows a concrete FOL query example that corresponds to the natural language query “List the capitals of non-European countries that have held either World Cup or Olympics”. 
\begin{figure}[t]
    \centering
    \includegraphics[width=\linewidth]{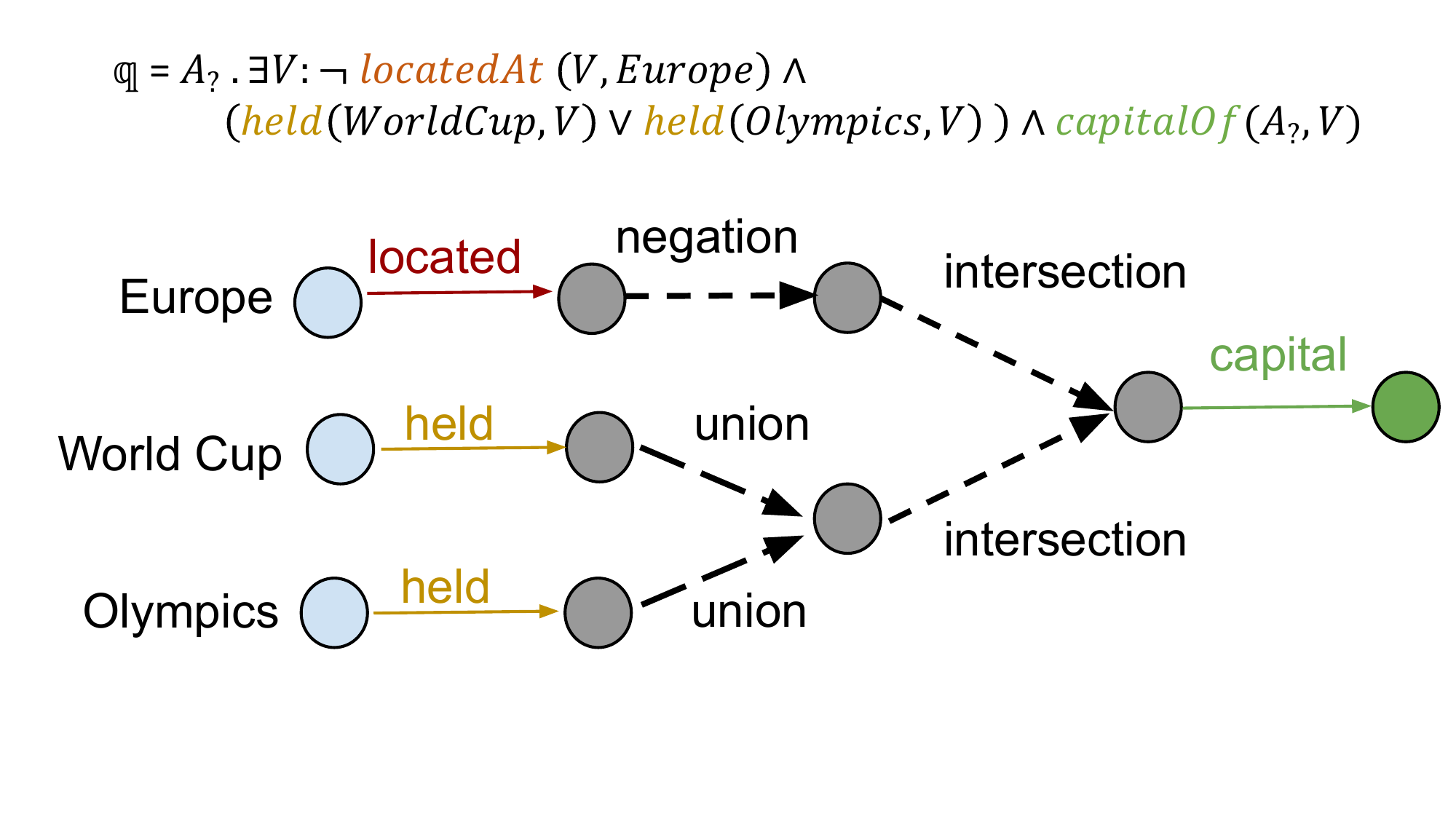}
    \caption{An example of FOL query corresponds to "List the capitals of non-European countries that have held either World Cup or Olympics".}
    \label{fig:FOL query example}
\end{figure}
The methods model logic operations by neural operators that act in the vector space. In particular, they represent a set of entities as geometric shapes and design neural-network-based logical operators to compute the embedding of the logical queries. 
The similarity between the embedded logical query and a candidate answer is calculated to measure plausibility.

Relations in KGs may form particular patterns, e.g.,
some relations are symmetric (e.g., spouse) while others are anti-symmetric (e.g., parent$\_$of); some relations are the inverse of other relations (e.g., son$\_$of and father$\_$of). 
Modeling relational patterns can potentially improve the generalization capability and has been extensively studied in link prediction tasks \cite{RotatE,nayyeri2021losspattern}. 
This has been shown particularly in \citet{RotatE} where the RotatE model utilizes the rotation operation in Complex space to model relational patterns and enhance link prediction task. 
However, current logical query embedding models adopt deep neural logical operators and are not able to model relational patterns, which do matter for logical query answering. Figure \ref{relation pattern} shows concrete examples of how relation patterns can impact complex query reasoning in KGs. 
\begin{figure*}[t]
    \centering
    \includegraphics[scale=0.5]{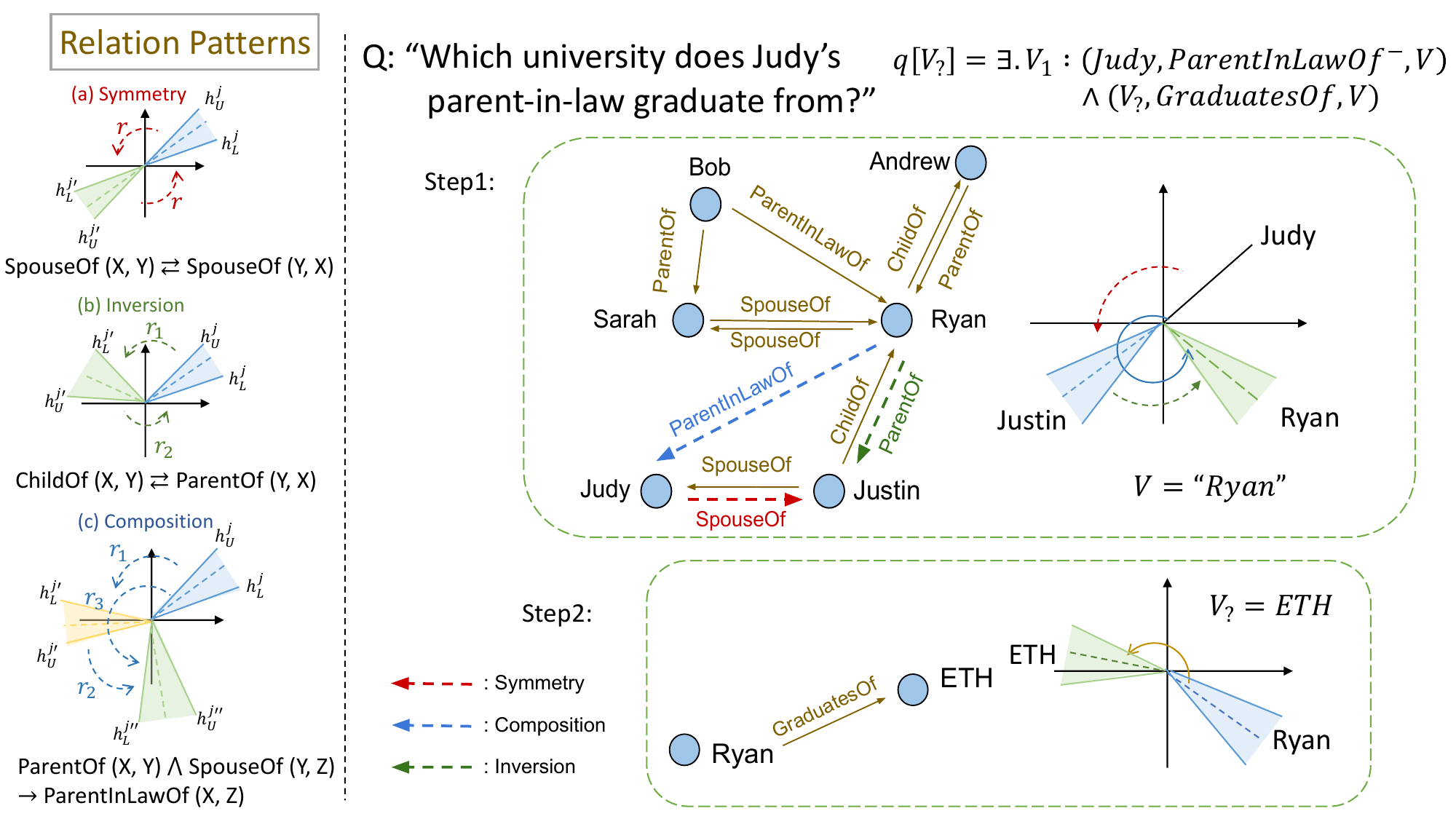}
    \caption{(\emph{top}) An example showing how relation patterns influence query answering over incomplete KGs: 
    the intermediate variable Judy's father-in-law in the query cannot be directly extracted from the given facts;
    (\emph{bottom}) An illustration of cone rotation. Based on the existing relations between Judy, Justin, and Ryan, and the learned potential relation patterns from other parts of the graph, the model is able to derive the following information by relational rotation: (i) Justin is Judy's spouse (symmetric rotation) (ii) Justin is the child of Ryan (inversion rotation) (iii) Ryan is Judy's father in law (compositional rotation). With the predicted query embedding on $V$, the model is able to derive where $V$ graduated from by another relational rotation.}
    \label{relation pattern}
\end{figure*}


To model and infer various KG relational patterns in logical queries answering process, we propose a novel method called RoConE  that combines the advantages
of Cone as a well-specified geometric representation for query embedding \cite{ConE}, and also the
rotation operator \cite{RotatE} as a powerful algebraic operation for pattern inference.
We define each relation as a rotation from the source entity set to the answer/intermediate entity set and perform neural logical operators upon the selected entity sets in the complex vector space. We provide theoretical proof of its ability to model relational patterns, as well as experimental results on how the relational patterns influence logical queries answering over three established benchmark datasets.


\section{Related Work}
To answer more complex queries, a number of path-based \citep{deeppath,lin-etal-2018-multi}, neural \citep{GQE,Query2box,BiQE}, and neural-symbolic \citep{CQD,GNNQE} methods have been developed. Among these methods, geometric and probabilistic query embedding approaches \citep{GQE,Query2box,ConE,BetaE} provide a way to tractably handle first-order logic operators in queries and equip excellent computing efficiency. This is done by representing entity sets as geometric objects or probability distributions, such as boxes \citep{Query2box}, cones \citep{ConE}, or Beta distribution \citep{BetaE}, and performing neural logical operations directly on them. In this way, the expensive search for intermediate variables in multi-hop queries is avoided. 
All of the above query embedding methods commonly apply multi-layer perceptron networks for selecting answer entities of atomic queries by relation and performing logical operations. Thus, their ability to capture relation patterns in KGs remains unclear. Our proposed method RoConE fills in this gap and combines the benefits of both worlds (KG embedding and complex query answering) together.

\section{Preliminaries}
\paragraph{Knowledge Graph}
 A KG $\mathcal{G} \subseteq \mathcal{E} \times \mathcal{R} \times \mathcal{E}$, where $\mathcal{E}$ and $\mathcal{R}$ represent the set of entities and relations respectively, can be defined as a set of subject-predicate-object $\bigl\langle s,p,o \bigr\rangle$ triples. For each triple $\bigl\langle e_i,r,e_j \bigr\rangle$, $e_{i,j} \in \mathcal{E}$ and $r \in \mathcal{R}$, it exists if and only if $e_i$ is linked to $e_j$ by relation $r$.


\paragraph{First-Order Logical Queries involving constants}
First-Order Logical Queries are broad and here we consider answering a subset, i.e., 
multi-hop queries with constants and first-order logical operations including conjunction ($\land$), disjunction ($\lor$), existential quantification ($\exists$), and negation ($\neg$). 
The query consists of a set of constants (anchor entities) $\mathcal{E}_a \subset \mathcal{E}$, a set of existentially quantified bound variables $V_1,..., V_m$ and a single target answer variable $V_?$. The disjunctive normal form of this subset of FOL queries is namely the disjunction of conjunctive formulas, and can be expressed as
\begin{equation}
    q[V_t] = V_t .\exists V_1,...,V_m: c_1 \lor c_2 \lor ... \lor c_n
\end{equation}
where $c_i$, $i\in \{1,...,n\}$ corresponds to a conjunctive query with one or more atomic queries $d$ i.e. \(c_i= d_{i1} \land d_{i2}\land ... \land d_{im}\). For each atomic formula, $d_{ij}$ = $(e_a,r,V)$ or $\neg (e_a,r,V)$ or $(V^{'},r,V)$ or $\neg (V^{'},r,V)$, where $e_a \in \mathcal{E}_a$, $V \in \{V_t, V_1, ..., V_k\}$, $V^{'} \in \{V_1, ..., V_k\}$, $r\in \mathcal{R}$. 
The goal of logical query embedding is to find a set of answer entities 
$\{e_{t1},e_{t2},...\}$ for $V_t$, such that $q[V_t]=\operatorname{True}$.


\section{Methodology}
To accommodate the learning of relation patterns in query answering over KGs, we propose a new model RoConE, which models entity sets as cones and relations as anti-clockwise angular rotations on cones in the complex plane. Each cone $\mathbf{q}$ is parameterized by $\mathbf{q} =(\mathbf{h}_{U}, \mathbf{h}_{L})$, where $|\mathbf{h}_{\{U,L\}}|=\textbf{1}$, and
$\mathbf{h}_{U}$, $\mathbf{h}_{L} \in \mathbb{C}^d$ represent the counter-clockwise upper and lower boundaries of a cone, such that 
\begin{equation}
\begin{split}
    \mathbf{h}_{U} \equiv e^{i\pmb{\theta}_{U}} \equiv e^{i(\pmb{\theta}_{ax} + \pmb{\theta}_{ap}/2)}, \\
    \mathbf{h}_{L} \equiv e^{i\pmb{\theta}_{L}} \equiv e^{i(\pmb{\theta}_{ax} - \pmb{\theta}_{ap}/2)},
\end{split}
\end{equation}
where $\pmb{\theta}_{ax} \in [-\pi,\pi)^d$ represents the angle between the symmetry axis of the cone and $\pmb{\theta}_{ap} \in [0,2\pi]^d$ represents the cone aperture, and $d$ is the embedding dimension.
The query and the set of entities are modeled as cones and each entity instance is modeled as a vector such that $\mathbf{h}_{U} = \mathbf{h}_{L}$.

\subsection{Logical Operators}
As illustrated in Figure \ref{fig:FOL query example}, each logical query can be represented as a directed acyclic graph (DAG) tree, where the tree nodes correspond to constants/anchor node entities or variables, and the edges correspond to atom relations or logical operations in a query. Logical operations are performed along the DAG tree from constants to the target answer variable. Figure \ref{loigcal operators} visualizes these operations on 2D complex space.
The logical operators can be defined as follows

\paragraph{Relational rotating projection}
Given a set of entities $\mathcal{S} \subset \mathcal{E}$ and a relation $r \in \mathcal{R}$, the projection operator selects the neighbouring entities $\mathcal{S}^{'}\subset\mathcal{E}$ by relation such that $\mathcal{S}^{'} = \{e \in \mathcal{S}, e^{'} \in \mathcal{S}^{'}: r(e, e^{'}) = True\}$. 
Existing query embedding methods \citep{ConE,Query2box,GQE,BetaE} apply multi-layer perceptron networks to accomplish this task. They do not accommodate the learning of potential KG relational patterns which might help in reasoning logical queries. Motivated by RotatE \citep{RotatE}, we represent each relation $\mathbf{r}$ as a counterclockwise relational rotation on query embeddings about the origin of the complex plane such that $\mathbf{r} = (\mathbf{r}_{U}, \mathbf{r}_{L})$, where $|\mathbf{r}_{\{U,L\}}|=\textbf{1}$, and $\mathbf{r}_{U}, \mathbf{r}_{L} \in \mathbb{C}^{d}$. Given the query embedding $\mathbf{q} = (\mathbf{h}_{U}, \mathbf{h}_{L})$ and a relation $\mathbf{r}$, the selected query embedding $\mathbf{q}^{'} = (\mathbf{h}_{U}^{'}, \mathbf{h}_{L}^{'})$ is
\begin{equation}
    \begin{split}
    \mathbf{h}_{U}^{'} &= \mathbf{h}_{U} \circ \mathbf{r}_{U} \equiv e^{i(\pmb{\theta}_{ax} + \pmb{\theta}_{ax,r} + (\pmb{\theta}_{ap}+\pmb{\theta}_{ap,r})/2)}, \\
    \mathbf{h}_{L}^{'} &= \mathbf{h}_{L} \circ \mathbf{r}_{L} \equiv e^{i(\pmb{\theta}_{ax} + \pmb{\theta}_{ax,r} - (\pmb{\theta}_{ap}+\pmb{\theta}_{ap,r})/2)},  
    \end{split}
\end{equation}
where $\circ$ is the Hadmard (element-wise) product, and $\pmb{\theta}_{ax,r}$, $\pmb{\theta}_{ap,r}$ correspond to the equivalent relational rotation on $\pmb{\theta}_{ax}$, $\pmb{\theta}_{ap}$. Specifically, for each element of the cone embeddings, we have $h_{U,i}^{'} = h_{U,i}r_{U,i}$ and $h_{L,i}^{'} = h_{L,i}r_{L,i}$.
Each element $r_i$ of the relational rotation $\mathbf{r}_{\{U,L\}}$ corresponds to a counterclockwise rotation on the matching element of upper or lower boundaries by $\theta_{r,i}$ radians about the origin of the complex plane.
By modeling the projection as relational rotation on a cone in the complex space, RoConE is shown to model and infer all three types of relation patterns introduced in Section 3. 
The lemmas and their proofs are in Appendix and the rotations corresponding to different relation patterns are visualized in Figure \ref{relation pattern rotations}.

\paragraph{Intersection}
For the input cone embeddings of entity sets \(\{ \mathbf{q}_{1},...,\mathbf{q}_{n}\}\), the intersection operator selects the intersection $\mathbf{q}^{'} = \cap_{j=1}^{n} \mathbf{q}_{j}$ with the \textbf{SemanticAverage}\footnote{ \textbf{SemanticAverage} and \textbf{CardMin} are explained in Appendix \ref{Intersection Operator}}($\cdot$) and \textbf{CardMin}($\cdot$) \citep{ConE}, which calculate the semantic centers and apertures of cones respectively. Since we have each cone \(\mathbf{q}_{j} = (\mathbf{h}_{j,U}, \mathbf{h}_{j,L}) \equiv (e^{i(\pmb{\theta}_{j,ax}+\pmb{\theta}_{j,ap}/2)}, e^{i(\pmb{\theta}_{j,ax}-\pmb{\theta}_{j,ap}/2)})\), the intersection $\mathbf{q}^{'} = (\mathbf{h}_{U}^{'}, \mathbf{h}_{L}^{'})$ can be defined as follows
\begin{equation}
\begin{split}
     &\mathbf{h}_{U}^{'} = e^{i(\pmb{\theta}_{ax}^{'} +\pmb{\theta}_{ap}^{'}/2)},\\
     &\mathbf{h}_{L}^{'} = e^{i(\pmb{\theta}_{ax}^{'}
    -\pmb{\theta}_{ap}^{'}/2)},
 \end{split}
\end{equation}
where 

\begin{equation}
\begin{split}
    \pmb{\theta}_{ax}^{'} &= \textbf{SemanticAverage}(\{(\pmb{\theta}_{j,ax}, \pmb{\theta}_{j,ap})\}_{j=1}^{n}),\\
    \pmb{\theta}_{ap}^{'} &= \textbf{CardMin}(\{(\pmb{\theta}_{j,ax}, \pmb{\theta}_{j,ap})\}_{j=1}^{n}).
\end{split}
\end{equation}

\paragraph{Disjunction}
Given the input cone embeddings of entity sets \(\{ \mathbf{q}_{1},...,\mathbf{q}_{n}\}\) where $\mathbf{q}_{j} = (\mathbf{h}_{j,U}, \mathbf{h}_{j,L})$, the disjunction operator finds the union set $\mathbf{q}^{'} = \cup_{j=1}^{n} \mathbf{q}_{j} = \{\mathbf{q}_{1}, \dots, \mathbf{q}_{n}\}$, which is equivalent to
\begin{equation*}
    \resizebox{\hsize}{!}{$(\{(\text{h}_{1,U}^{1}, \text{h}_{1,L}^{1}),\dots,(\text{h}_{n,U}^{1}, \text{h}_{n,L}^{1})\},\dots,\{(\text{h}_{1,U}^{d}, \text{h}_{1,L}^{d}),\dots,(\text{h}_{n,U}^{d}, \text{h}_{n,L}^{d})\})$}
\end{equation*}
Following \citet{Query2box}, we also adopt DNF technique to translate FOL queries into the disjunction of conjunctive queries and only perform the disjunction operator in the last step in the computation graph.
\paragraph{Negation}
Given a set of entities $\mathcal{S} \subset \mathcal{E}$, the negation operator finds its complementary negation $\bar{S} = \mathcal{E}\setminus\mathcal{S}$. Given the cone embedding of entity set $\mathcal{S}$, $\mathbf{q}^{\mathcal{S}} = (\mathbf{h}_{U}^{\mathcal{S}}, \mathbf{h}_{L}^{\mathcal{S}})$, its corresponding complementary negation \(\bar{\mathbf{q}} = (\mathbf{h}_{L}^{\mathcal{S}}, \mathbf{h}_{U}^{\mathcal{S}})\).

\subsection{Optimization}
Given a set of training samples, our goal is to minimize the distance between the query cone embedding $\mathbf{q} = (\mathbf{h}_{U}^{q}, \mathbf{h}_{L}^{q})$ and the answer entity vector $\mathbf{h}^{*}$ while maximizing the distance between this query and negative samples. Thus, we define our training objective, the negative sample loss as
\begin{equation}
    \resizebox{\hsize}{!}{
    $L = -log \sigma (\gamma - d(\mathbf{h}^{*},\mathbf{q})) - \frac{1}{k}\sum_{i=1}^{k}log\sigma(d(\mathbf{h}^{*}_{i},\mathbf{q})-\gamma)
    $}
\end{equation}
where $d(\cdot)$ is the combined distance specifically defined in Appendix \ref{Distance function}, $\gamma$ is a margin, $\mathbf{h}^{*}$ is a positive entity and $\mathbf{h}^{*}_{i}$ is the i-th negative entity, k is the number of negative samples, and $\sigma(\cdot)$ represents the sigmoid function.

\section{Experiments}
\begin{table}
\centering
\resizebox{\hsize}{!}{
\begin{tabular}{lllllllllll}
\hline
 \textbf{Dataset} & \textbf{Model} & \textbf{1p} & \textbf{2p} & \textbf{3p} & \textbf{2i} & \textbf{3i} & \textbf{pi} & \textbf{ip} & \textbf{2u} & \textbf{up}\\
\hline
 & GQE & 35.2 & 7.4	& 5.5 & 23.6 & 35.7 & 16.7 & 10.9 & 8.4 & 5.8\\
 & Query2Box & 41.3 & 9.9 & 7.2 & 31.1 & 45.4 & 21.9 & 13.3 & 11.9 & 8.1\\
FB15k-237 & BetaE & 39.0 & \underline{10.9} & \underline{10.0} & 28.8 & 42.5 & 22.4 & 12.6 & 12.4 & \underline{9.7}\\
 & ConE & 41.8 & \textbf{12.8} & \textbf{11} & \underline{32.6} & 47.3 & \textbf{25.5} & 14.0 & \textbf{14.5} & \textbf{10.8}\\
 & RoConE & \textbf{42.2} & 10.5 & 7.5 & \textbf{33.5} & \textbf{48.1} & \underline{23.5} & \textbf{14.5} & \underline{12.8} & 8.9\\
\hline
 & GQE & 33.1 & 12.1 & 9.9 & 27.3 & 35.1 & 18.5 & 14.5 & 8.5 & 9.0\\
 & Query2Box & 42.7 & 14.5 & 11.7 & 34.7 & 45.8 & 23.2 & 17.4 & 12.0 & 10.7\\
NELL995 & BetaE & 53.0 & 13.0 & 11.4 & 37.6 & 47.5 & 24.1 & 14.3 & 12.2 & 8.5\\
 & ConE & \underline{53.1} & \underline{16.1} & \underline{13.9} & \underline{40.0} & \underline{50.8} & \textbf{26.3} & \underline{17.5} & \underline{15.3} & \underline{11.3}\\
 & RoConE & \textbf{54.5} & \textbf{17.7} & \textbf{14.4} & \textbf{41.9} & \textbf{53.0} & \underline{26.1} & \textbf{20.7} & \textbf{16.5} & \textbf{12.8}\\
\hline
\end{tabular}
}
\caption{
MRR results (\%) of RoConE, BETAE, Q2B, and GQE on answering EPFOL ($\exists, \land, \lor$) queries. The best statistic is highlighted in bold, while the second best is highlighted in underline.
}
\label{EPFOL}
\end{table}

\paragraph{Experiment setup}
We evaluate RoConE on two benchmark datasets NELL995 \citep{NELL995} and FB15k-237 \citep{FB15k-237}. RoConE is compared with various state-of-the-art query embedding models. Mean reciprocal rank (MRR) is used as the metric. More experimental details are in Appendix \ref{Experiment}. 

\paragraph{Main results}
Table \ref{EPFOL} summarizes the performance of all methods on answering various query types without negation. 
RoConE outperforms baseline methods on the majority of query types while achieving competitive results on the others. We also observed that RoConE shows better performances on NELL-995 than those on FB15k-237. We conjecture that this is due to the discrepancy in the distribution of relation patterns between these two datasets.
As Table \ref{Negation results} shows, RoConE does not bring many improvements for answering query types involving negation. There are two folds of possible reasons that might lead to this result. Firstly, the traditional modeling of negation as complements may be problematic, which can be reflected in the poor performance of all existing QE models. This largely brings too much uncertainty into the query embedding and leads to severe bias in prediction. Secondly, the influence of relation patterns on negation queries is limited when we model negation as complements. 

\begin{table}[ht]
\centering
\resizebox{\hsize}{!}{
\begin{tabular}{lllllll}
\hline
 \textbf{Dataset} & \textbf{Model} & \textbf{2in} & \textbf{3in} & \textbf{inp} & \textbf{pin} & \textbf{pni}\\
\hline
 \multirow{3}{5em}{FB15k-237} & BetaE & 5.1 & 7.9 & 7.4 & 3.6 & 3.4\\
 & ConE & \textbf{5.4} & \textbf{8.6} & \textbf{7.8} & \textbf{4.0} & \textbf{3.6}\\
 & RoConE & 4.1 & 7.9 & 6.9 & 3.1 & 2.8\\
\hline
\multirow{3}{4em}{NELL995} & BetaE & 5.1 & 7.8 & 10 & 3.1 & 3.5\\
 & ConE & \textbf{5.7} & \textbf{8.1} & \textbf{10.8} & \textbf{3.5} & \textbf{3.9}\\
 & RoConE & 5.2 & 7.7 & 9.4 & 3.2 & 3.7\\
\hline
\end{tabular}
}
\caption{
MRR results (\%) of RoConE, BETAE, and ConE on answering queries with negation ($\neg$).
}
\label{Negation results}
\end{table}

\paragraph{Ablation study}
To investigate the influence of relation patterns on the query answering model, we designed an ablation study for RoConE on NELL995. The results are reported in Table \ref{ablation study}. RoConE (Base) denotes the neural baseline model without the relational rotating projection module. RoConE (S.E) and RoConE (Trunc) correspond to two variants of RoConE with different rotation strategies. More details are elaborated in Appendix \ref{Variants of RoConE}. The overperformance of RoConE and RoConE (truncation) reconfirms the efficiency of relation patterns in logical query reasoning tasks.

\begin{table}[ht]
\centering
\resizebox{\hsize}{!}{
\begin{tabular}{llllllllll}
\hline
 \textbf{Model} & \textbf{1p} & \textbf{2p} & \textbf{3p} & \textbf{2i} & \textbf{3i} & \textbf{pi} & \textbf{ip} & \textbf{2u} & \textbf{up}\\
\hline
RoConE (Base) & 53.1 & 16.1 & 13.9 & 40 & 50.8 & 26.3 & 17.5 & 15.3 & 11.3\\
RoConE (S.E.) & 51.3 & 16.6 & 13.8 & 38.4 & 48.4 & 18.9 & 19.5 & 14.7 & 12.1\\
RoConE (Trunc) & 53.7 & \textbf{17.8} & 14.2 & \textbf{41.9} & \textbf{53.0} & \textbf{27.5} & 20.4 & \textbf{16.6} & 12.7 \\
RoConE & \textbf{54.5} & 17.7 & \textbf{14.4} & \textbf{41.9} & \textbf{53.0} & 26.1 & \textbf{20.7} & 16.5 & \textbf{12.8}\\
\hline
\end{tabular}
}
\caption{
Ablation study of RoConE on NELL995
}
\label{ablation study}
\end{table}

\section{Conclusion}
In this paper, we theoretically and experimentally investigate the influence of relation patterns on enhancing the logical query reasoning task. By combining the relational rotating projection with the cone query embedding model in complex space, we improve FOL queries reasoning with relation pattern inference.

\section{Limitations}
RoConE incorporates the learning of potential KG relation patterns into the existing query embedding model for solving logical queries. We provide initial proof of the efficiency of relation patterns to complex reasoning via both theoretical explanations and experiments. One limitation of RoConE is that the relational rotating projection can not be generalized to other geometric query embedding methods, except for cone embedding, due to the restriction of natural geometry properties. For future work, we will propose more general and effective strategies to enhance the learning of relation patterns in the complex query reasoning task. 

\section{Ethics Statement}
The authors declare that we have no conflicts of interest. This article does not contain any studies involving business data and personal information.

\bibliography{anthology,custom}

\begin{thebibliography}{15}
\expandafter\ifx\csname natexlab\endcsname\relax\def\natexlab#1{#1}\fi

\bibitem[{Arakelyan et~al.(2021)Arakelyan, Daza, Minervini, and Cochez}]{CQD}
Erik Arakelyan, Daniel Daza, Pasquale Minervini, and Michael Cochez. 2021.
\newblock \href {https://openreview.net/forum?id=Mos9F9kDwkz} {Complex query
  answering with neural link predictors}.
\newblock In \emph{9th International Conference on Learning Representations,
  {ICLR} 2021, Virtual Event, Austria, May 3-7, 2021}. OpenReview.net.

\bibitem[{Hamilton et~al.(2018)Hamilton, Bajaj, Zitnik, Jurafsky, and
  Leskovec}]{GQE}
William~L. Hamilton, Payal Bajaj, Marinka Zitnik, Dan Jurafsky, and Jure
  Leskovec. 2018.
\newblock \href
  {https://proceedings.neurips.cc/paper/2018/hash/ef50c335cca9f340bde656363ebd02fd-Abstract.html}
  {Embedding logical queries on knowledge graphs}.
\newblock In \emph{Advances in Neural Information Processing Systems 31: Annual
  Conference on Neural Information Processing Systems 2018, NeurIPS 2018,
  December 3-8, 2018, Montr{\'{e}}al, Canada}, pages 2030--2041.

\bibitem[{Kotnis et~al.(2021)Kotnis, Lawrence, and Niepert}]{BiQE}
Bhushan Kotnis, Carolin Lawrence, and Mathias Niepert. 2021.
\newblock \href {https://doi.org/10.1609/aaai.v35i6.16630} {Answering complex
  queries in knowledge graphs with bidirectional sequence encoders}.
\newblock \emph{Proceedings of the AAAI Conference on Artificial Intelligence},
  35(6):4968--4977.

\bibitem[{Lin et~al.(2018)Lin, Socher, and Xiong}]{lin-etal-2018-multi}
Xi~Victoria Lin, Richard Socher, and Caiming Xiong. 2018.
\newblock \href {https://doi.org/10.18653/v1/D18-1362} {Multi-hop knowledge
  graph reasoning with reward shaping}.
\newblock In \emph{Proceedings of the 2018 Conference on Empirical Methods in
  Natural Language Processing}, pages 3243--3253, Brussels, Belgium.
  Association for Computational Linguistics.

\bibitem[{Nayyeri et~al.(2021)Nayyeri, Xu, Yaghoobzadeh, Vahdati, Alam, Yazdi,
  and Lehmann}]{nayyeri2021losspattern}
Mojtaba Nayyeri, Chengjin Xu, Yadollah Yaghoobzadeh, Sahar Vahdati,
  Mirza~Mohtashim Alam, Hamed~Shariat Yazdi, and Jens Lehmann. 2021.
\newblock Loss-aware pattern inference: A correction on the wrongly claimed
  limitations of embedding models.
\newblock In \emph{Pacific-Asia Conference on Knowledge Discovery and Data
  Mining}, pages 77--89. Springer.

\bibitem[{Paszke et~al.(2019)Paszke, Gross, Massa, Lerer, Bradbury, Chanan,
  Killeen, Lin, Gimelshein, Antiga, Desmaison, K{\"{o}}pf, Yang, DeVito,
  Raison, Tejani, Chilamkurthy, Steiner, Fang, Bai, and
  Chintala}]{DBLP:journals/corr/abs-1912-01703}
Adam Paszke, Sam Gross, Francisco Massa, Adam Lerer, James Bradbury, Gregory
  Chanan, Trevor Killeen, Zeming Lin, Natalia Gimelshein, Luca Antiga, Alban
  Desmaison, Andreas K{\"{o}}pf, Edward~Z. Yang, Zach DeVito, Martin Raison,
  Alykhan Tejani, Sasank Chilamkurthy, Benoit Steiner, Lu~Fang, Junjie Bai, and
  Soumith Chintala. 2019.
\newblock \href {http://arxiv.org/abs/1912.01703} {Pytorch: An imperative
  style, high-performance deep learning library}.
\newblock \emph{CoRR}, abs/1912.01703.

\bibitem[{Ren et~al.(2020)Ren, Hu, and Leskovec}]{Query2box}
Hongyu Ren, Weihua Hu, and Jure Leskovec. 2020.
\newblock \href {https://openreview.net/forum?id=BJgr4kSFDS} {Query2box:
  Reasoning over knowledge graphs in vector space using box embeddings}.
\newblock In \emph{8th International Conference on Learning Representations,
  {ICLR} 2020, Addis Ababa, Ethiopia, April 26-30, 2020}. OpenReview.net.

\bibitem[{Ren and Leskovec(2020)}]{BetaE}
Hongyu Ren and Jure Leskovec. 2020.
\newblock \href
  {https://proceedings.neurips.cc/paper/2020/hash/e43739bba7cdb577e9e3e4e42447f5a5-Abstract.html}
  {Beta embeddings for multi-hop logical reasoning in knowledge graphs}.
\newblock In \emph{Advances in Neural Information Processing Systems 33: Annual
  Conference on Neural Information Processing Systems 2020, NeurIPS 2020,
  December 6-12, 2020, virtual}.

\bibitem[{Sun et~al.(2019)Sun, Deng, Nie, and Tang}]{RotatE}
Zhiqing Sun, Zhi{-}Hong Deng, Jian{-}Yun Nie, and Jian Tang. 2019.
\newblock \href {http://arxiv.org/abs/1902.10197} {Rotate: Knowledge graph
  embedding by relational rotation in complex space}.
\newblock \emph{CoRR}, abs/1902.10197.

\bibitem[{Toutanova and Chen(2015)}]{FB15k-237}
Kristina Toutanova and Danqi Chen. 2015.
\newblock \href {https://doi.org/10.18653/v1/W15-4007} {Observed versus latent
  features for knowledge base and text inference}.
\newblock In \emph{Proceedings of the 3rd Workshop on Continuous Vector Space
  Models and their Compositionality}, pages 57--66, Beijing, China. Association
  for Computational Linguistics.

\bibitem[{Xiong et~al.(2017{\natexlab{a}})Xiong, Hoang, and Wang}]{deeppath}
Wenhan Xiong, Thien Hoang, and William~Yang Wang. 2017{\natexlab{a}}.
\newblock \href {https://doi.org/10.18653/v1/D17-1060} {{D}eep{P}ath: A
  reinforcement learning method for knowledge graph reasoning}.
\newblock In \emph{Proceedings of the 2017 Conference on Empirical Methods in
  Natural Language Processing}, pages 564--573, Copenhagen, Denmark.
  Association for Computational Linguistics.

\bibitem[{Xiong et~al.(2017{\natexlab{b}})Xiong, Hoang, and Wang}]{NELL995}
Wenhan Xiong, Thien Hoang, and William~Yang Wang. 2017{\natexlab{b}}.
\newblock \href {https://doi.org/10.18653/v1/D17-1060} {{D}eep{P}ath: A
  reinforcement learning method for knowledge graph reasoning}.
\newblock In \emph{Proceedings of the 2017 Conference on Empirical Methods in
  Natural Language Processing}, pages 564--573, Copenhagen, Denmark.
  Association for Computational Linguistics.

\bibitem[{Zaheer et~al.(2017)Zaheer, Kottur, Ravanbakhsh, Poczos,
  Salakhutdinov, and Smola}]{deepsets}
Manzil Zaheer, Satwik Kottur, Siamak Ravanbakhsh, Barnabas Poczos, Ruslan
  Salakhutdinov, and Alexander Smola. 2017.
\newblock \href {https://doi.org/10.48550/ARXIV.1703.06114} {Deep sets}.

\bibitem[{Zhang et~al.(2021)Zhang, Wang, Chen, Ji, and Wu}]{ConE}
Zhanqiu Zhang, Jie Wang, Jiajun Chen, Shuiwang Ji, and Feng Wu. 2021.
\newblock \href
  {https://proceedings.neurips.cc/paper/2021/hash/a0160709701140704575d499c997b6ca-Abstract.html}
  {Cone: Cone embeddings for multi-hop reasoning over knowledge graphs}.
\newblock In \emph{Advances in Neural Information Processing Systems 34: Annual
  Conference on Neural Information Processing Systems 2021, NeurIPS 2021,
  December 6-14, 2021, virtual}, pages 19172--19183.

\bibitem[{Zhu et~al.(2022)Zhu, Galkin, Zhang, and Tang}]{GNNQE}
Zhaocheng Zhu, Mikhail Galkin, Zuobai Zhang, and Jian Tang. 2022.
\newblock \href {https://proceedings.mlr.press/v162/zhu22c.html}
  {Neural-symbolic models for logical queries on knowledge graphs}.
\newblock In \emph{International Conference on Machine Learning, {ICML} 2022,
  17-23 July 2022, Baltimore, Maryland, {USA}}, volume 162 of \emph{Proceedings
  of Machine Learning Research}, pages 27454--27478. {PMLR}.

\end{thebibliography}
\bibliographystyle{acl_natbib}
\newpage
\section*{Appendices}
\addcontentsline{toc}{section}{Appendices}
\renewcommand{\thesubsection}{\Alph{subsection}}

\begin{figure*}[ht]
    \centering
    \includegraphics[scale=0.6]{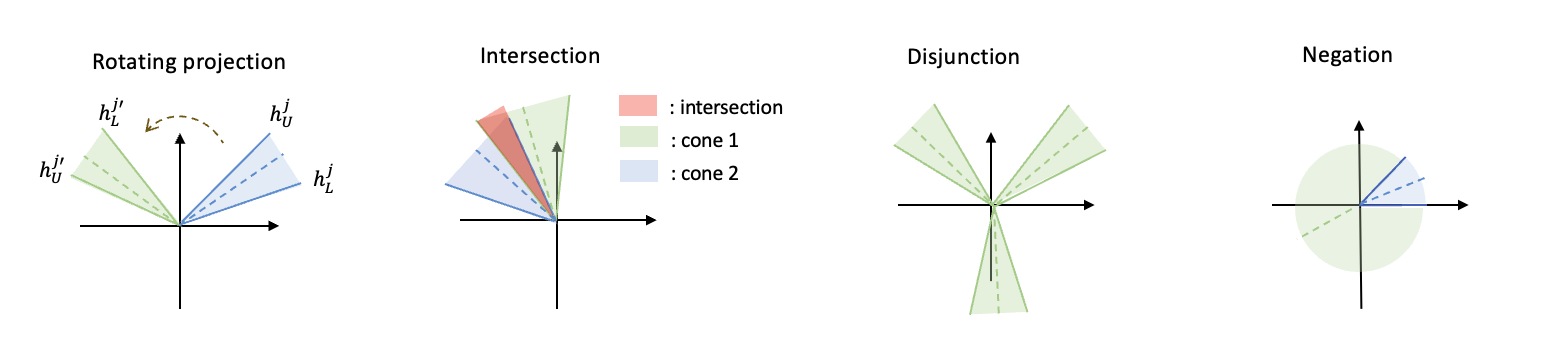}
    \caption{Visualization of logical operators \citep{ConE}}
    \label{loigcal operators}
\end{figure*}

\begin{figure*}[ht]
    \centering
    \includegraphics[scale=0.5]{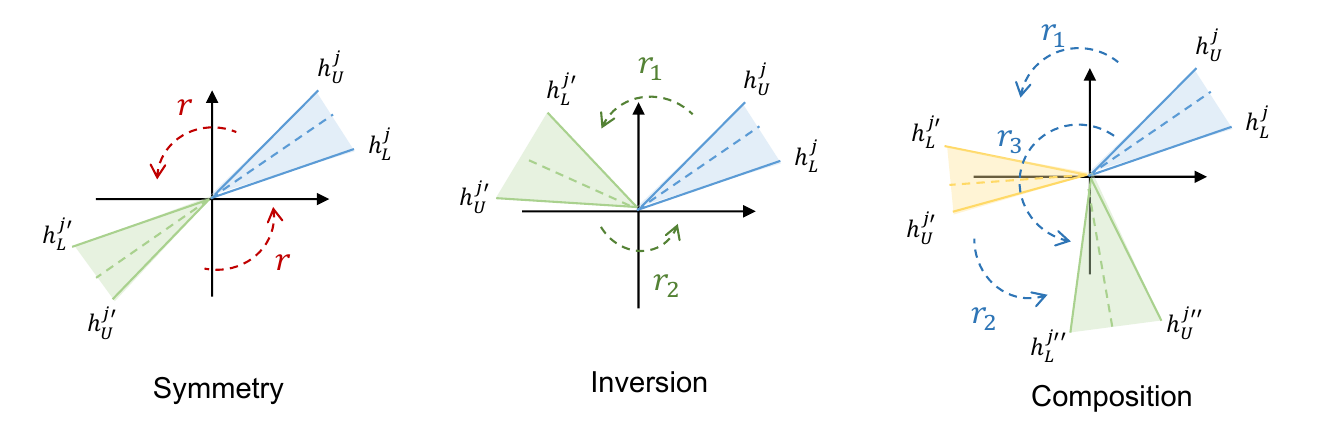}
    \caption{Relational rotations corresponding to different relation patterns on 2D complex plane.}
    \label{relation pattern rotations}
\end{figure*}

\subsection{Lemmas and Corresponding Proofs}
\begin{theorem}
RoConE can infer the symmetry and anti-symmetry patterns. 
\end{theorem}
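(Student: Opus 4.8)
The plan is to adapt the argument that \citet{RotatE} used for rotation embeddings to the cone geometry of RoConE, i.e.\ to show that for a symmetric (resp.\ anti-symmetric) relation there is an admissible choice of the relation embedding $\mathbf{r}=(\mathbf{r}_U,\mathbf{r}_L)$ that realises exactly this behaviour, and that conversely the pattern constrains $\mathbf{r}$ in the expected way. First I would fix the definitions: $r$ is \emph{symmetric} if $r(e_1,e_2)\Rightarrow r(e_2,e_1)$ and \emph{anti-symmetric} if $r(e_1,e_2)\Rightarrow\neg r(e_2,e_1)$, for all $e_1,e_2\in\mathcal{E}$. In the embedding space I would read ``$r(e_1,e_2)$ holds'' as ``the cone produced by the relational rotating projection of $e_1$'s cone under $\mathbf{r}$ contains the embedding of $e_2$'', using that an entity is a zero-aperture cone ($\mathbf{h}_U=\mathbf{h}_L$) and that, by the projection rule, such a projection sends a cone with axis $\pmb{\theta}_{ax}$ and aperture $\pmb{\theta}_{ap}$ to one with axis $\pmb{\theta}_{ax}+\pmb{\theta}_{ax,r}$ and aperture $\pmb{\theta}_{ap}+\pmb{\theta}_{ap,r}$.

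For the symmetric case I would exhibit a witness: take $\pmb{\theta}_{ap,r}=\mathbf{0}$ and every coordinate of $\pmb{\theta}_{ax,r}$ in $\{0,\pi\}$, so that $\mathbf{r}\circ\mathbf{r}=\mathbf{1}$, equivalently $2\pmb{\theta}_{ax,r}\equiv\mathbf{0}\pmod{2\pi}$. Projecting $e_1$ by $\mathbf{r}$ gives the point-cone at $\pmb{\theta}_{e_1}+\pmb{\theta}_{ax,r}$, which coincides with $e_2$'s embedding exactly when $r(e_1,e_2)$ holds; projecting $e_2$ by the same $\mathbf{r}$ returns the point-cone at $\pmb{\theta}_{e_1}+2\pmb{\theta}_{ax,r}=\pmb{\theta}_{e_1}$, i.e.\ $e_1$'s embedding, so $r(e_2,e_1)$ is simultaneously satisfied. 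I would then note the converse: if $r(e_1,e_2)$ and $r(e_2,e_1)$ must hold together, the composite of the two projections fixes $e_1$'s embedding, forcing $2\pmb{\theta}_{ax,r}\equiv\mathbf{0}\pmod{2\pi}$ coordinate-wise, which is the characterisation of symmetric relations in RoConE.

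For anti-symmetry I would argue in the opposite direction: choose $\mathbf{r}$ with at least one coordinate of $\pmb{\theta}_{ax,r}$ outside $\{0,\pi\}$, so that $\mathbf{r}\circ\mathbf{r}\neq\mathbf{1}$. Then the two-step composition moves $e_1$'s embedding to $\pmb{\theta}_{e_1}+2\pmb{\theta}_{ax,r}$, which differs from $\pmb{\theta}_{e_1}$ in that coordinate, so $r(e_1,e_2)$ and $r(e_2,e_1)$ cannot both be encoded as containment; hence RoConE can represent an anti-symmetric $r$. I would also check that this is consistent with the model's negation semantics $\bar{\mathbf{q}}=(\mathbf{h}_L,\mathbf{h}_U)$, reading ``$\neg r(e_2,e_1)$'' as $e_1$ falling outside the projected cone.

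The step I expect to be the main obstacle is bridging the gap between this point-level reasoning and the fact that RoConE embeds \emph{sets} as cones with nonzero aperture, since the relational projection also perturbs the aperture through $\pmb{\theta}_{ap,r}$ and ``returns to the original cone'' is then not literally an equality of regions. I would handle this by reasoning coordinate-wise on the axis angles $\pmb{\theta}_{ax}\in[-\pi,\pi)^d$ modulo $2\pi$ (which the element-wise projection $h'_{U,i}=h_{U,i}r_{U,i}$, $h'_{L,i}=h_{L,i}r_{L,i}$ licenses), taking the witnessing relation to have $\pmb{\theta}_{ap,r}=\mathbf{0}$ so apertures are preserved, and working with a single representative entity pair; the remaining bookkeeping — angle wrap-around and the observation that a single coordinate of strict inequality already blocks symmetry — is routine.
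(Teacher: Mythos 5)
Your argument is essentially the paper's own: the proof given in the appendix likewise reads $(X,r,Y)$ as equality of the rotated cone with the target cone, composes the relation rotation with itself, and characterises symmetry by $\mathbf{r}_{U}\circ\mathbf{r}_{U}=\mathbf{r}_{L}\circ\mathbf{r}_{L}=\mathbf{1}$ and anti-symmetry by the negation of that condition. You supply some care the paper omits --- an explicit containment semantics for ``$r(e_1,e_2)$ holds,'' a concrete witness with $\pmb{\theta}_{ap,r}=\mathbf{0}$ and each coordinate of $\pmb{\theta}_{ax,r}$ in $\{0,\pi\}$, and attention to the aperture shift --- but the underlying RotatE-style composition argument is the same.
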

\begin{proof}
Given cone query embeddings $X = (\mathbf{h}_{U}^{x},\mathbf{h}_{L}^{x})$, $Y = (\mathbf{h}_{U}^{y},\mathbf{h}_{L}^{y})$ and relation $r = (\mathbf{r}_{U},\mathbf{r}_{L})$,
if $(X,r,Y)$ and $(Y,r,X)$ hold, we have:
\begin{equation}
\begin{split}
    (\mathbf{h}_{U}^{x}\circ\mathbf{r}_{U},\mathbf{h}_{L}^{x}\circ\mathbf{r}_{L}) &=(\mathbf{h}_{U}^{y},\mathbf{h}_{L}^{y}),\\
    (\mathbf{h}_{U}^{y}\circ\mathbf{r}_{U},\mathbf{h}_{L}^{y}\circ\mathbf{r}_{L}) &=(\mathbf{h}_{U}^{x},\mathbf{h}_{L}^{x}),
\end{split}
\end{equation}
\[\rightarrow (\mathbf{h}_{U}^{y}\circ\mathbf{r}_{U}\circ\mathbf{r}_{U},\mathbf{h}_{L}^{y}\circ\mathbf{r}_{L}\circ\mathbf{r}_{L})=(\mathbf{h}_{U}^{y},\mathbf{h}_{L}^{y})\]
thus, it is inferrable that $X$ is symmetric to $Y$ if and only if $(\mathbf{r}_{U}\circ\mathbf{r}_{U} = \mathbf{1},\mathbf{r}_{L}\circ\mathbf{r}_{L} = \mathbf{1})$. \\
Similarly, if $(X,r,Y)$ and $\neg(Y,r,X)$ hold, we have:
\begin{equation}
\begin{split}
        (\mathbf{h}_{U}^{x}\circ\mathbf{r}_{U},\mathbf{h}_{L}^{x}\circ\mathbf{r}_{L})&=(\mathbf{h}_{U}^{y},\mathbf{h}_{L}^{y}),\\(\mathbf{h}_{U}^{y}\circ\mathbf{r}_{U},\mathbf{h}_{L}^{y}\circ\mathbf{r}_{L})&\neq(\mathbf{h}_{U}^{x},\mathbf{h}_{L}^{x}),
\end{split}
\end{equation}
\[\rightarrow (\mathbf{h}_{U}^{x}\circ\mathbf{r}_{U}\circ\mathbf{r}_{U},\mathbf{h}_{L}^{x}\circ\mathbf{r}_{L}\circ\mathbf{r}_{L})\neq(\mathbf{h}_{U}^{x},\mathbf{h}_{L}^{x})\]
thus, it is inferrable that $X$ is anti-symmetric to $Y$ if and only if $(\mathbf{r}_{U}\circ\mathbf{r}_{U} \neq \mathbf{1},\mathbf{r}_{L}\circ\mathbf{r}_{L} \neq \mathbf{1})$. 
\end{proof}

\begin{theorem}
RoConE can infer the inversion pattern.
\end{theorem}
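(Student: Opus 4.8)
The plan is to follow exactly the template used for the symmetry/anti-symmetry lemma. First I would pin down the formal statement of the inversion pattern: two relations $r_1 = (\mathbf{r}_{1,U}, \mathbf{r}_{1,L})$ and $r_2 = (\mathbf{r}_{2,U}, \mathbf{r}_{2,L})$ are mutually inverse when, for cone embeddings $X = (\mathbf{h}_{U}^{x}, \mathbf{h}_{L}^{x})$ and $Y = (\mathbf{h}_{U}^{y}, \mathbf{h}_{L}^{y})$, the fact $(X, r_1, Y)$ holds if and only if $(Y, r_2, X)$ holds. The goal is then to exhibit a constraint on the relation embeddings that is equivalent to this biconditional.

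Next I would translate both atomic facts into the relational rotating projection equations of Section 4. The fact $(X, r_1, Y)$ gives $(\mathbf{h}_{U}^{x} \circ \mathbf{r}_{1,U}, \mathbf{h}_{L}^{x} \circ \mathbf{r}_{1,L}) = (\mathbf{h}_{U}^{y}, \mathbf{h}_{L}^{y})$, and $(Y, r_2, X)$ gives $(\mathbf{h}_{U}^{y} \circ \mathbf{r}_{2,U}, \mathbf{h}_{L}^{y} \circ \mathbf{r}_{2,L}) = (\mathbf{h}_{U}^{x}, \mathbf{h}_{L}^{x})$. Substituting the first identity into the second eliminates $Y$ and yields $(\mathbf{h}_{U}^{x} \circ \mathbf{r}_{1,U} \circ \mathbf{r}_{2,U}, \mathbf{h}_{L}^{x} \circ \mathbf{r}_{1,L} \circ \mathbf{r}_{2,L}) = (\mathbf{h}_{U}^{x}, \mathbf{h}_{L}^{x})$. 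Because the Hadamard product with a unit-modulus vector is invertible componentwise, this is equivalent to the purely relational condition $\mathbf{r}_{1,U} \circ \mathbf{r}_{2,U} = \mathbf{1}$ and $\mathbf{r}_{1,L} \circ \mathbf{r}_{2,L} = \mathbf{1}$, i.e. $\mathbf{r}_2 = \mathbf{r}_1^{-1}$, which in angular form reads $\pmb{\theta}_{r_2} = -\pmb{\theta}_{r_1}$ (the complex conjugate, which again has modulus one and hence lies in the admissible parameter space). I would conclude that RoConE can represent the inversion pattern, and point to Figure \ref{relation pattern rotations} for the geometric picture.

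The step I expect to require the most care is arguing that the derived constraint captures the pattern \emph{uniformly} over all entity pairs rather than for a single $(X, Y)$: here I would emphasise that $\mathbf{r}_{1} \circ \mathbf{r}_{2} = \mathbf{1}$ does not depend on $X$ or $Y$, so once it holds the equivalence $(X, r_1, Y) \Leftrightarrow (Y, r_2, X)$ is forced for every pair, and conversely the equivalence holding for the training pairs forces this constraint; I would also note that the unit-modulus requirement $|\mathbf{r}_{\{U,L\}}| = \mathbf{1}$ is preserved under inversion and that the upper- and lower-boundary rotations satisfy the constraint simultaneously, which is consistent with the shared $\pmb{\theta}_{ax,r}$/$\pmb{\theta}_{ap,r}$ parameterization of a single relation.
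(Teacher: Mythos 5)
Your proposal follows essentially the same route as the paper's own proof: translate $(X,r_1,Y)$ and $(Y,r_2,X)$ into the rotating-projection equations, substitute one into the other, and read off the condition $\mathbf{r}_{U,1}\circ\mathbf{r}_{U,2}=\mathbf{1}$, $\mathbf{r}_{L,1}\circ\mathbf{r}_{L,2}=\mathbf{1}$ (the paper eliminates $X$ rather than $Y$, which is immaterial by commutativity of the Hadamard product). Your added remarks on invertibility of unit-modulus rotations and uniformity over entity pairs only make the argument more careful than the original.
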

\begin{proof}
Given cone query embeddings $X = (\mathbf{h}_{U}^{x},\mathbf{h}_{L}^{x})$, $Y = (\mathbf{h}_{U}^{y},\mathbf{h}_{L}^{y})$ and relations $r_{1} = (\mathbf{r}_{U,1},\mathbf{r}_{L,1})$, $r_{2} = (\mathbf{r}_{U,2},\mathbf{r}_{L,2})$
if $(X,r_{1},Y)$ and $(Y,r_{2},X)$ hold, we have:
\begin{equation}
\begin{split}
        (\mathbf{h}_{U}^{x}\circ\mathbf{r}_{U,1},\mathbf{h}_{L}^{x}\circ\mathbf{r}_{L,1})&=(\mathbf{h}_{U}^{y},\mathbf{h}_{L}^{y}),\\(\mathbf{h}_{U}^{y}\circ\mathbf{r}_{U,2},\mathbf{h}_{L}^{y}\circ\mathbf{r}_{L,2})&=(\mathbf{h}_{U}^{x},\mathbf{h}_{L}^{x}),
\end{split}
\end{equation}
\[\rightarrow (\mathbf{h}_{U}^{y}\circ\mathbf{r}_{U,2}\circ\mathbf{r}_{U,1},\mathbf{h}_{L}^{y}\circ\mathbf{r}_{L,2}\circ\mathbf{r}_{L,1})=(\mathbf{h}_{U}^{y},\mathbf{h}_{L}^{y})\]
thus, it is inferrable that X is inversible to Y if and only if $\mathbf{r}_{U,2}\circ\mathbf{r}_{U,1}=1$ and $\mathbf{r}_{L,2}\circ\mathbf{r}_{L,1}=1$.
\end{proof}

\begin{theorem}
RoConE can infer the composition pattern.
\end{theorem}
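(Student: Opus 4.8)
The plan is to mirror exactly the structure used in the proofs of the previous two lemmas, replacing the two-relation inversion argument with a three-relation compositional one. First I would set up the notation: take cone query embeddings $X = (\mathbf{h}_{U}^{x},\mathbf{h}_{L}^{x})$, $Y = (\mathbf{h}_{U}^{y},\mathbf{h}_{L}^{y})$, $Z = (\mathbf{h}_{U}^{z},\mathbf{h}_{L}^{z})$ and three relations $r_{1} = (\mathbf{r}_{U,1},\mathbf{r}_{L,1})$, $r_{2} = (\mathbf{r}_{U,2},\mathbf{r}_{L,2})$, $r_{3} = (\mathbf{r}_{U,3},\mathbf{r}_{L,3})$, and assume the compositional premise that $(X,r_{1},Y)$, $(Y,r_{2},Z)$, and $(X,r_{3},Z)$ all hold. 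Writing out the relational rotating projection for each triple gives $\mathbf{h}_{U}^{x}\circ\mathbf{r}_{U,1} = \mathbf{h}_{U}^{y}$, $\mathbf{h}_{U}^{y}\circ\mathbf{r}_{U,2} = \mathbf{h}_{U}^{z}$, and $\mathbf{h}_{U}^{x}\circ\mathbf{r}_{U,3} = \mathbf{h}_{U}^{z}$ (and the analogous three equations for the lower boundaries $L$).

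The key step is then to chain the first two equations: substituting $\mathbf{h}_{U}^{y} = \mathbf{h}_{U}^{x}\circ\mathbf{r}_{U,1}$ into the second yields $\mathbf{h}_{U}^{x}\circ\mathbf{r}_{U,1}\circ\mathbf{r}_{U,2} = \mathbf{h}_{U}^{z}$, using commutativity and associativity of the Hadamard product. Comparing this with the third equation $\mathbf{h}_{U}^{x}\circ\mathbf{r}_{U,3} = \mathbf{h}_{U}^{z}$ and cancelling the common factor $\mathbf{h}_{U}^{x}$ (which is legitimate since every entry has unit modulus and is hence invertible in $\mathbb{C}$) gives $\mathbf{r}_{U,1}\circ\mathbf{r}_{U,2} = \mathbf{r}_{U,3}$, and symmetrically $\mathbf{r}_{L,1}\circ\mathbf{r}_{L,2} = \mathbf{r}_{L,3}$. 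Conversely, I would note that if the relation embeddings satisfy $\mathbf{r}_{U,3} = \mathbf{r}_{U,1}\circ\mathbf{r}_{U,2}$ and $\mathbf{r}_{L,3} = \mathbf{r}_{L,1}\circ\mathbf{r}_{L,2}$, then whenever $(X,r_{1},Y)$ and $(Y,r_{2},Z)$ hold the triple $(X,r_{3},Z)$ is forced, so the composition pattern is genuinely inferable. This establishes the ``if and only if'' characterization in the same spirit as Lemmas~2 and~3.

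The main obstacle, such as it is, is not a deep one but a modeling subtlety: the rotation acts not on an entity vector but on a pair of cone boundaries, and one must check that the aperture terms behave consistently under composition. Concretely, since $\mathbf{r}_{\{U,L\}}$ act element-wise as pure phase rotations $e^{i\theta_{ax,r}\pm i\theta_{ap,r}/2}$, the composition of two rotations adds both the axis angles and the aperture half-widths, so the composed rotation is again a valid cone rotation of the form required by the projection operator; I would spell this out briefly to make clear that $r_1 \circ r_2$ is realizable as a single relation embedding in the model's parameter space. Once that is observed, the phase-angle version of the argument (adding exponents modulo $2\pi$) is identical to the complex-number version and the proof closes. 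I would present the complex-product form in the main display and remark parenthetically on the equivalent additive-angle form, matching the style of the preceding proofs.

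\begin{proof}
Given cone query embeddings $X = (\mathbf{h}_{U}^{x},\mathbf{h}_{L}^{x})$, $Y = (\mathbf{h}_{U}^{y},\mathbf{h}_{L}^{y})$, $Z = (\mathbf{h}_{U}^{z},\mathbf{h}_{L}^{z})$ and relations $r_{1} = (\mathbf{r}_{U,1},\mathbf{r}_{L,1})$, $r_{2} = (\mathbf{r}_{U,2},\mathbf{r}_{L,2})$, $r_{3} = (\mathbf{r}_{U,3},\mathbf{r}_{L,3})$, if $(X,r_{1},Y)$, $(Y,r_{2},Z)$ and $(X,r_{3},Z)$ hold, we have:
\begin{equation}
\begin{split}
    (\mathbf{h}_{U}^{x}\circ\mathbf{r}_{U,1},\mathbf{h}_{L}^{x}\circ\mathbf{r}_{L,1})&=(\mathbf{h}_{U}^{y},\mathbf{h}_{L}^{y}),\\
    (\mathbf{h}_{U}^{y}\circ\mathbf{r}_{U,2},\mathbf{h}_{L}^{y}\circ\mathbf{r}_{L,2})&=(\mathbf{h}_{U}^{z},\mathbf{h}_{L}^{z}),\\
    (\mathbf{h}_{U}^{x}\circ\mathbf{r}_{U,3},\mathbf{h}_{L}^{x}\circ\mathbf{r}_{L,3})&=(\mathbf{h}_{U}^{z},\mathbf{h}_{L}^{z}),
\end{split}
\end{equation}
\[\rightarrow (\mathbf{h}_{U}^{x}\circ\mathbf{r}_{U,1}\circ\mathbf{r}_{U,2},\mathbf{h}_{L}^{x}\circ\mathbf{r}_{L,1}\circ\mathbf{r}_{L,2})=(\mathbf{h}_{U}^{x}\circ\mathbf{r}_{U,3},\mathbf{h}_{L}^{x}\circ\mathbf{r}_{L,3})\]
Since every entry of $\mathbf{h}_{U}^{x}$ and $\mathbf{h}_{L}^{x}$ has unit modulus and is hence invertible, cancelling the common factors $\mathbf{h}_{U}^{x}$ and $\mathbf{h}_{L}^{x}$ yields $\mathbf{r}_{U,1}\circ\mathbf{r}_{U,2}=\mathbf{r}_{U,3}$ and $\mathbf{r}_{L,1}\circ\mathbf{r}_{L,2}=\mathbf{r}_{L,3}$; conversely, whenever these identities hold and $(X,r_{1},Y)$, $(Y,r_{2},Z)$ hold, the triple $(X,r_{3},Z)$ follows. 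Thus it is inferrable that $r_{3}$ is the composition of $r_{1}$ and $r_{2}$ if and only if $\mathbf{r}_{U,3}=\mathbf{r}_{U,1}\circ\mathbf{r}_{U,2}$ and $\mathbf{r}_{L,3}=\mathbf{r}_{L,1}\circ\mathbf{r}_{L,2}$, where the composed rotation, acting element-wise as a pure phase rotation, is again a valid cone rotation (its axis angles and aperture half-widths being the sums of those of $r_{1}$ and $r_{2}$ modulo $2\pi$).
\end{proof}
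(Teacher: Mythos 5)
Your proof is correct and follows essentially the same chaining argument as the paper: write out the three projection equations, substitute to get $\mathbf{h}_{U}^{x}\circ\mathbf{r}_{U,1}\circ\mathbf{r}_{U,2}=\mathbf{h}_{U}^{x}\circ\mathbf{r}_{U,3}$, and read off the condition on the relation embeddings. You are in fact slightly more careful than the paper, which states the final condition only for the upper boundaries and omits both the explicit cancellation step and the converse direction that you spell out.
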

\begin{proof}
Given cone query embeddings $X = (\mathbf{h}_{U}^{x},\mathbf{h}_{L}^{x})$, $Y = (\mathbf{h}_{U}^{y},\mathbf{h}_{L}^{y})$, $Z = (\mathbf{h}_{U}^{z},\mathbf{h}_{L}^{z})$ and relations $r_{1} = (\mathbf{r}_{U,1},\mathbf{r}_{L,1})$, $r_{2} = (\mathbf{r}_{U,2},\mathbf{r}_{L,2})$
if $(X,r_{1},Y)$ and $(Y,r_{2},Z)$ hold, we have:
\begin{equation}
    \begin{split}
    (\mathbf{h}_{U}^{x}\circ\mathbf{r}_{U,1},\mathbf{h}_{L}^{x}\circ\mathbf{r}_{L,1})&=(\mathbf{h}_{U}^{y},\mathbf{h}_{L}^{y}),\\
    (\mathbf{h}_{U}^{y}\circ\mathbf{r}_{U,2},\mathbf{h}_{L}^{y}\circ\mathbf{r}_{L,2})&=(\mathbf{h}_{U}^{z},\mathbf{h}_{L}^{z}),\\
    (\mathbf{h}_{U}^{x}\circ\mathbf{r}_{U,3},\mathbf{h}_{L}^{x}\circ\mathbf{r}_{L,3})&=(\mathbf{h}_{U}^{z},\mathbf{h}_{L}^{z}),
    \end{split}
\end{equation}

$\rightarrow$
\begin{equation*}
    \resizebox{\hsize}{!}{
    $(\mathbf{h}_{U}^{x}\circ\mathbf{r}_{U,1}\circ\mathbf{r}_{U,2},\mathbf{h}_{L}^{x}\circ\mathbf{r}_{L,1}\circ\mathbf{r}_{L,2}) = (\mathbf{h}_{U}^{x}\circ\mathbf{r}_{U,3},\mathbf{h}_{L}^{x}\circ\mathbf{r}_{L,3})$
    }
\end{equation*}
thus, it is inferrable that relation $r_3$ is compositional of $r_1$ and $r_2$ if and only if $\mathbf{r}_{U,1}\circ\mathbf{r}_{U,2}=\mathbf{r}_{U,3}$.
\end{proof}

\subsection{Details of Intersection Operator}
\label{Intersection Operator}
In this section, we explain the details of two important components of the intersection operator, \textbf{SemanticAverage}($\cdot$) and \textbf{CardMin}($\cdot$) 
\paragraph{SemanticAverage($\cdot$)} is expected to compute the semantic center $\pmb{\theta}_{ax}^{'}$ of the input $\{(\pmb{\theta}_{j,ax}, \pmb{\theta}_{j,ap})\}_{j=1}^{n}$. Specifically, the computation process is provided as:
\begin{equation}
\begin{split}
        [\mathbf{x};\mathbf{y}] &= \sum_{i=1}^{n} [\mathbf{a}_{j}\circ \text{cos}(\pmb{\theta}_{j,ax});\mathbf{a}_{j}\circ \text{sin}(\pmb{\theta}_{j,ax})],\\
    \pmb{\theta}_{ax}^{'} &= \textbf{Arg}(\mathbf{x},\mathbf{y}),
\end{split}
\end{equation}
where cos and sin represent element-wise cosine and sine functions. $\textbf{Arg}$($\cdot$) computes the argument given $\mathbf{x}$ and $\mathbf{y}$. $\mathbf{a}_{j} \in \mathbb{R}^d$ are attention weights such that 
\begin{equation}
    \resizebox{0.95\hsize}{!}{%
        $[\mathbf{a}_{j}]_{k} = \frac{\text{exp}([\textbf{MLP}([\pmb{\theta}_{j,ax} - \pmb{\theta}_{j,ap/2};\pmb{\theta}_{j,ax} + \pmb{\theta}_{j,ap/2}])]_{k})}{\sum_{m=1}^{n}\text{exp}([\textbf{MLP}([\pmb{\theta}_{m,ax} - \pmb{\theta}_{m,ap/2};\pmb{\theta}_{m,ax} + \pmb{\theta}_{m,ap/2}])]_{k})}, $%
        }
\end{equation}
where \textbf{MLP} : $\mathbb{R}^{2d} \rightarrow \mathbb{R}^{d}$ is a multi-layer perceptron network, [$\cdot$;$\cdot$] is the concatenation of two vectors. 

\paragraph{CardMin($\cdot$)} predicts the aperture $\pmb{\theta}_{ap}^{'}$ of the intersection set such that $[\pmb{\theta}_{ap}^{'}]_{i}$ should be no larger than any $\pmb{\theta}_{j,ap}^{i}$, since the intersection set is the subset of all input entity sets.
\begin{equation}
    \resizebox{0.98\hsize}{!}{%
    $[\theta_{ap}^{'}]_{i} = \text{min}\{\theta_{1,ap}^{i},\dots,\theta_{n,ap}^{i}\}\cdot\sigma([\textbf{DeepSets}(\{(\pmb{\theta}_{j,ax},\pmb{\theta}_{j,ap})\}_{j=1}^{n})]_{i})$%
    }
\end{equation}
where \textbf{DeepSets}$(\{(\pmb{\theta}_{j,ax}, \pmb{\theta}_{j,ap})\}_{j=1}^{n})$ \citep{deepsets} is given by
\begin{equation*}
    \resizebox{0.95\hsize}{!}{%
    $\textbf{MLP}(\frac{1}{n}\sum_{j=1}^{n}\textbf{MLP}([\pmb{\theta}_{j,ax} - \pmb{\theta}_{j,ap/2};\pmb{\theta}_{j,ax} + \pmb{\theta}_{j,ap/2}])).$%
    }
\end{equation*}

\subsection{Distance Function}
\label{Distance function}
Inspired by \citet{ConE,Query2box}, the distance between $\mathbf{q}$ and $\mathbf{h}^{*}$ is defined as a combination of inside and outside distances, $d_{com}(\mathbf{q}, \mathbf{h}^{*}) = d_{o} + \lambda d_{i}$:
\begin{equation}
\begin{split}
    d_{o}(\mathbf{q}, \mathbf{h}^{*}) &= min\{\|\mathbf{h}_{U}^{q} - \mathbf{h}^{*}\|_{1}, \| \mathbf{h}_{L}^{q} - \mathbf{h}^{*}\|_{1}\},\\
    d_{i}(\mathbf{q}, \mathbf{h}^{*}) &= min\{\|\mathbf{h}_{ax}^{q} - \mathbf{h}^{*}\|_{1}, \|\mathbf{h}_{U}^{q} - \mathbf{h}_{ax}^{q}\|_{1}\},
\end{split}
\end{equation}
where $\|\cdot\|_{1}$ is the $L_1$ norm, $\mathbf{h}_{ax}^{q}$ represents the cone center, and $\lambda \in (0,1)$.\\
Note that $d_{com}$ can only be used for measuring the distance between a single query embedding and an answer entity vector. Since we represent the disjunctive queries in Disjunctive Normal Form as a set of query embeddings $\{\mathbf{q}_{1},\dots,\mathbf{q}_{n}\}$, the distance between the answer vector and such set of embeddings is the minimum distance between the vector and each of these sets:
\begin{equation}
    \resizebox{\hsize}{!}{
    $d_{dis}(\mathbf{q},\mathbf{h}^{*}) = min\{d_{com}(\mathbf{q}_{1}, \mathbf{h}^{*}),\dots,d_{com}(\mathbf{q}_{n}, \mathbf{h}^{*})\}
    $}
\end{equation}
Thus, the overall distance function can be defined as:
\begin{equation}
\resizebox{\hsize}{!}{
  $d(\mathbf{q},\mathbf{h}^{*}) =
    \begin{cases}
      d_{com}(\mathbf{q},\mathbf{h}^{*}), & \text{query without disjunction}\\
      d_{dis}(\mathbf{q},\mathbf{h}^{*}), & \text{query with disjunction}
    \end{cases}
    $}
\end{equation}

\subsection{More Details about Experiments}
\label{Experiment}
In this section, we elaborate more details about our experiments. The code is anonymously available at \url{ https://anonymous.4open.science/r/RoConE-1C1E}.
\subsubsection{Dataset and Query structures}
RoConE is compared with various state-of-the-art query embedding models, including GQE \citep{GQE}, Query2Box \citep{Query2box}, BetaE \citep{BetaE}, and ConE \citep{ConE}. For a fair comparison with existing query embedding models, we use the same query structures and datasets NELL995 \citep{deeppath} and FB15k-237, and the open-sourced framework created by \cite{BetaE} for logical query answering tasks. Table \ref{Dataset Statistics} summarizes the descriptive statistics about the benchmark datasets. Figure \ref{query structures} illustrates all query structures used in our experiments. 

\begin{table*}
\centering
\begin{tabular}{lllllll}
\hline
\textbf{Dataset} & \multicolumn{2}{c}{\textbf{Training}} & \multicolumn{2}{c}{\textbf{Validation}}& \multicolumn{2}{c}{\textbf{Testing}}\\
\hline
& \textbf{EPFOL} & \textbf{Negation} & \hspace{3mm}\textbf{1p} & \textbf{others} & \hspace{3mm}\textbf{1p} & \textbf{others}\\ 
FB15k-237 & 149,689 & 14,968 & 20,101 & 5,000 & 22,812 & 5,000 \\
NELL995 & 107,982 & 10,798 & 16,927 & 4,000 & 17,034 & 4,000 \\
\hline
\end{tabular}
\caption{\label{Dataset Statistics}
Statistics of benchmark datasets : FB15k-237 and NELL995. EPFOL represents queries without negation, namely 1p/2p/3p/2i/3i.}
\end{table*}

\begin{figure*}
    \centering
    \includegraphics[scale=0.7]{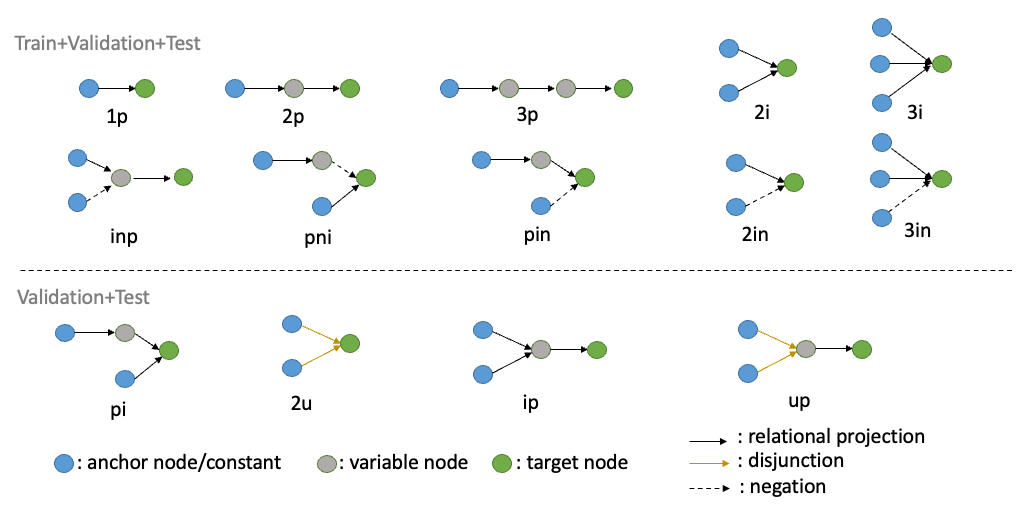}
    \caption{Fourteen types of queries used in the experiments. "$p$" represents relational projection, "$i$" represents intersection, "$u$" represents union and "n" represents negation. The upper of queries is used in the training stage, while all queries are evaluated in the validation and test stages.}
    \label{query structures}
\end{figure*}

\subsubsection{Hyperparameters and Computational Resources}
All of our experiments are implemented in Pytorch \citep{DBLP:journals/corr/abs-1912-01703} framework and run on four Nvidia A100 GPU cards. For hyperparameters search, we performed a grid search of learning rates in $\{5\times 10^{-5},10^{-4},5\times10^{-4}\}$, the batch size in $\{256, 512, 1024\}$, the negative sample sizes in $\{128,64\}$, the regularization coefficient $\lambda$ in $\{0.02, 0.05, 0.08, 0.1\}$ and the margin $\gamma$ in $\{20, 30, 40, 50\}$. The best hyperparameters are saved in Table \ref{Hyperparameters}. 

\begin{table}
\centering
\resizebox{\hsize}{!}{
\begin{tabular}{lllllll}
\hline
\textbf{Dataset} & d & b & n & $\gamma$ & $\hspace{4mm} l$ & $\lambda$\\
\hline
FB15k-237 & 1600 & 128 & 512 & 30 & 5 $\times 10^{-5} $ & 0.1\\
NELL995 & 800 & 128 & 512 & 20 & $1 \times 10^{-4}$ & 0.02\\
\hline
\end{tabular}
}
\caption{\label{Hyperparameters}
Hyperparameters found by grid search. d represents the embedding dimension, b is the batch size, n is the negative sampling size, $\gamma$ is the margin in loss, l represents the learning rate, $\lambda$ is the regularization parameter in the distance function.
}
\end{table}

\subsubsection{Evaluation Metrics}
In this paper, we use Mean Reciprocal Rank (MRR) as the evaluation metric. Given a sample of queries Q, Mean reciprocal rank represents the average of the reciprocal ranks of results:
\[MRR = \frac{1}{|Q|}\sum_{i=1}^{|Q|}\frac{1}{rank_{i}}\]

\subsubsection{Variants of relational projection strategies}
\label{Variants of RoConE}
To better incorporate the learning the relation patterns in query embedding model, we proposed alternative two variants of the relational rotation projection strategy:
\begin{itemize}
    \item RoConE (Trunc): after relational rotation, the two boundaries of query embedding are truncated to ensure that aperture will not be greater than $\pi$.  
    
    \item RoConE (S.E): replaces relational rotation on boundaries by a non-linear sigmoid function which maps the boundaries to shrink or expand and only maintains rotations on $\theta_{ax}$.
\end{itemize}

\subsubsection{Error Bars of Main Results}
We have run RoConE 10 times with different random seeds, and obtain mean values and standard deviations of RoConE’s MRR results on EPFO and negation queries. Table \ref{EPFOL(sd)} shows that mean values and standard deviations of RoConE’s MRR results on EPFOL queries. Table \ref{negation(sd)} shows mean values and standard deviations of RoConE’s MRR results on negation queries.
\begin{table}
\centering
\resizebox{\hsize}{!}{
\begin{tabular}{llllllllll}
\hline
\textbf{Dataset} & \textbf{1p} & \textbf{2p} & \textbf{3p} & \textbf{2i} & \textbf{3i} & \textbf{pi} & \textbf{ip} & \textbf{2u} & \textbf{up}\\
\hline
$\multirow{2}{5em}{FB15k-237}$ & 42.2$\pm$ & 10.5$\pm$ & 7.5$\pm$ & 33.5$\pm$ & 48.1$\pm$ & 23.5$\pm$ & 14.5$\pm$ & 12.8$\pm$ & 8.9$\pm$\\
& 0.054 & 0.108 & 0.142 & 0.075 & 0.172 & 0.105 & 0.183 & 0.096 & 0.193\\
\hline
$\multirow{2}{5em}{NELL995}$ & 54.5$\pm$ & 17.7$\pm$ & 14.4$\pm$ & 41.9$\pm$ & 53.0$\pm$ & 26.1$\pm$ & 20.7$\pm$ & 16.5$\pm$ & 12.8$\pm$\\
& 0.097 & 0.152 & 0.192 & 0.130 & 0.038 & 0.084 & 0.146 & 0.112 & 0.137\\
\hline
\end{tabular}
}
\caption{RoConE's MRR mean values and standard variances ($\%$) on answering EPFO ($\exists$, $\land$, $\lor$) queries}
\label{EPFOL(sd)}
\end{table}

\begin{table}
\centering
\resizebox{\hsize}{!}{
\begin{tabular}{llllll}
\hline
 \textbf{Dataset} & \textbf{2in} & \textbf{3in} & \textbf{inp} & \textbf{pin} & \textbf{pni}\\
\hline
$\multirow{2}{5em}{FB15k-237}$ & 4.1$\pm$ & 7.9$\pm$ & 6.9$\pm$ & 3.1$\pm$ & 2.8$\pm$\\
 & 0.089 & 0.092 & 0.161 & 0.083 & 0.077\\
\hline
$\multirow{2}{5em}{NELL995}$ & 5.2$\pm$ & 7.7$\pm$ & 9.4$\pm$ & 3.2$\pm$ & 3.7$\pm$\\
& 0.012 & 0.079 & 0.154 & 0.012 & 0.097\\
\hline
\end{tabular}
}
\caption{RoConE's MRR mean values and standard variances ($\%$) on answering negation queries}
\label{negation(sd)}
\end{table}

\end{document}